\newtheorem{theorem}{Theorem}
\newtheorem{Thm}{Theorem}[section]
\newtheorem{proposition}[Thm]{Proposition}
\newtheorem{lemma}[Thm]{Lemma}
\newtheorem{Remark}[Thm]{Remark}
\newtheorem{Remarks}[Thm]{Remarks}
\newcounter{mnotecount}[section]
\renewcommand{\themnotecount}{\thesection.\arabic{mnotecount}}
\newcommand{\mnote}[1]
{\protect{\stepcounter{mnotecount}}$^{\mbox{~\footnotesize
$
\bullet$\themnotecount}}$ \marginpar{
\raggedright\tiny\em
$\!\!\!\!\!\!\,\bullet$\themnotecount: #1} }
\begin{document}
\title{The formation of trapped surfaces in the gravitational collapse of spherically symmetric  scalar fields with a positive cosmological constant.}
\author{Jo\~ao L. Costa}
\affil{Mathematics Department, Lisbon University Institute -- ISCTE}
\affil{CAMGSD, Instituto Superior T\'ecnico, ULisboa}

\maketitle
%
%
\begin{abstract}

Given spherically symmetric characteristic initial data for the Einstein-scalar field system with a positive cosmological constant, we provide a criterion, in terms of the dimensionless size and dimensionless renormalized mass content of an annular region of the data,  for the formation of a future trapped surface. This corresponds to an extension of Christodoulou's classical criterion by the inclusion of the cosmological term.
\end{abstract}
%
%
%
%
%
\section{Introduction and main result}\label{sectionIntro}

The existence of a (future) trapped surface - a compact codimension 2 spacelike submanifold with negative inward and outward (future) null expansions - has profound implication for the global structure of spacetime: most notably, under quite general causal assumptions and fairly weak energy conditions, it guarantees the existence of a non-empty black hole region~\cite{Chrusciel09,Chrusciel18} and, by Penrose's Singularity/Incompleteness Theorem, it also implies future causal geodesic incompleteness. 

 In~\cite{Christodoulou91}, Christodoulou established a criterion for the dynamical formation of trapped spheres in the context of a characteristic initial value problem
  for the spherically symmetric Einstein-scalar field system; this was an essential step in his seminal proof of both the Weak and Strong Cosmic censorship conjectures, under appropriate regularity conditions for such system, that was completed in~\cite{Christodoulou99}. Later, in celebrated work~\cite{Christodoulou09}, Christodoulou extended his criterion to the context of the  Einstein vacuum equations without symmetry assumptions; in the mean time, important developments have been achieved in that setting~\cite{Klainerman12,Klainerman14, An12, An17}.  Returning to the context of spherically symmetric spacetimes we would like to mention that criteria for the formation of trapped surfaces were also established for solutions of the Einstein-Vlasov equations, in~\cite{Andreasson17},  and the Einstein-Euler system, in~\cite{Burtscher17}.
 Concerning the impact that the formation of a trapped surface has in the global structure of asymptotically flat spherically symmetric spacetimes we refer the reader to the qualitative results in~\cite{Dafermos05,Kommemi} and the quantitative results obtained in~\cite{An20,An202,Luk19}. 

In view of the recent developments concerning the global analysis of black hole spacetimes with a positive cosmological constant - from the remarkable proofs of the non-linear stability of the local/exterior regions of Kerr-Newman de Sitter~\cite{hintzVasy,Hintz}, to the new results concerning the structure of black hole interiors~\cite{CGNS1,CGNS2,CGNS3,CGNS4}, as well as the improved  understanding of the geometry of cosmological regions~\cite{Costa-Natario-Oliveira,schlue2,DafRen} - it seems relevant to study the formation of trapped surfaces in the presence of $\Lambda>0$.
In this paper we start this study by revisiting Christodoulou's original criterion and extending it to the context of solutions  to the Einstein-scalar field system with a positive cosmological constant.  Another motivation for this paper comes form the study of gravitational collapse in the ``hard'' phase of Christodoulou's two-phase model~\cite{Christodoulou95}, where the Einstein-Euler system reduces to the Einstein-scalar field system with cosmological constant $\Lambda=1$. 

Here we will follow the general strategy developed in~\cite{Christodoulou09} closely, but in order to do so one needs to overcome  
some new challenges created by the introduction of a positive cosmological constant. Most notably the new difficulties emanate from: the well known inadequacy of the Hawking mass~\eqref{1-mu0} in the context of solutions with a positive cosmological constant and, most importantly, the loss of the basic monotonicity properties of the gradient of the radial function~\footnote{This loss of monotonicity can be traced back to the fact that $\partial_r(1-\mu)$, as defined in~\eqref{dr1-mu}, has no definite sign if $\Lambda>0$; we stress that this is not the case for $\Lambda\leq 0$. That such a sign, or lack thereof, can have a remarkable effect on the global structure of spacetime is well known, see for instance the discussion in~\cite[Section 10]{Dafermos04}.}. The first issue is resolved by the standard replacing of the Hawking mass with its renormalized version~\eqref{piDef}, in terms of which our criterion is formulated~\eqref{eta0}. 
The second difficulty is dealt with by obtaining ``weak monotonicity'' estimates, see for example~\eqref{lambdaMonot}; unfortunately their ``weakness'' propagates to the remaining estimates and makes the subsequent analysis more involved; for instance, as a consequence, we are no longer able to rely on some elementary integration formulas that are available for the $\Lambda=0$ case.         
Other specificities of the positive cosmological constant setting are discussed in Remarks~\ref{rmks} and Section~\ref{sectionSystem}.
We end this discussion 
by observing that  instead of relying on ``a geometric Bondi coordinate together with a null frame''~\footnote{As nicely summarized in~\cite{An20}.}, which provide the basic framework in~\cite{Christodoulou09}, we rely solely on a global double null coordinate setup.   

The main result of this paper is the following:    

\begin{theorem}
\label{mainThm} Let $({\cal M}={\cal Q}\times_r \mathbb{S}^2,g,\phi)$ be a smooth spherically symmetric  solution of the Einstein-scalar field system~\eqref{einstein}-\eqref{energymomentumtensor} with a cosmological constant $\Lambda>0$. Assume there exist (double null) coordinates 
$(u,v):{\cal Q}\rightarrow\mathbb{R}^2$, with $\partial_u$ and $\partial_v$ future oriented and 
\begin{equation}
 \label{metric}
 g= - \Omega^2 (u,v)  du  dv + r^2 (u,v)  \mathring{g}\;,
\end{equation}
where $\mathring{g}$ is the metric of the round 2-sphere. 
Assume also that ${\cal C}_0^+=\{u=0\;,\; v\geq 0\}$ is a future complete null line emanating from the timelike curve $\Gamma:=\{r=0\}\subset{\cal Q}$ and that ${\cal Q}^+:=J^+({\cal C}^+_0)\cap J^+(\Gamma)$ coincides with the future maximal globally hyperbolic development of the data induced on ${\cal C}_0^+$. 
 
Let $0<v_1<v_2$ be such that: 
\begin{enumerate}[(i)]
 \item  $(0,v_1)\in J^-(\overline{\Gamma})$, where $\overline{\Gamma}$ is the closure of $\Gamma$ in $\mathbb{R}^2$, 
 \item $r(0,v_2)<1/\sqrt{\Lambda}$,
 \item $\partial_ur(0,v)<0$ and  $\partial_vr(0,v)>0$, for all $0\leq v \leq v_2$,
\end{enumerate}
and consider the {\em dimensionless size}
\begin{equation}
\label{delta0}
  \delta_0:=\frac{r(0,v_2)-r(0,v_1)}{r(0,v_1)}\;,
\end{equation}
and the {\em dimensionless renormalized mass content}
\begin{equation}
\label{eta0}
 \eta _0:=\frac{2(\varpi(0,v_2)-\varpi(0,v_1))}{r(0,v_2)}\;,
\end{equation}
where $\varpi:{\cal Q}\rightarrow\mathbb{R}$ is the {\bf renormalized} Hawking mass (see~\eqref{piDef}). 
Then, there are constants $c_1,\delta_1>0$ such that, if $\delta_0<\delta_1$ and
\begin{equation}
\label{mainAss}
 \eta_0> c_1 \delta_0\log(1/\delta_0)\;,
\end{equation}
 there exists $u^*>0$ for which {\bf $(u^*,v_2)\in\cal Q^+$ is a marginally trapped sphere}, i.e., $\partial_vr(u^*,v_2)=0$, and all {\bf $(u,v_2)\in{\cal Q}^+$, with $u>u^*$, are trapped spheres}, i.e, $\partial_vr(u,v_2)<0$.  Moreover, $r(u^*,v_1)>0$, i.e.,  the marginally trapped sphere forms, along $v=v_2$, before (as measured by $u$) the sphere $(u,v_1)$ reachs $\overline{\Gamma}$. 
\end{theorem}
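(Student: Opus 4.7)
The plan is to follow the contradiction/bootstrap strategy of Christodoulou~\cite{Christodoulou91,Christodoulou09}, adapting it to accommodate the loss of monotonicity introduced by $\Lambda>0$. First I would set up the Einstein-scalar field equations for~\eqref{metric}, in particular the Raychaudhuri equations for $\lambda:=\partial_vr$ and $\nu:=\partial_ur$ (with appropriate $\Omega^2$-rescalings) and the two evolution equations for the renormalized Hawking mass $\varpi$. The central identity is $1-\mu:=1-2\varpi/r-\Lambda r^2/3=-4\lambda\nu/\Omega^2$, so that a marginally trapped sphere corresponds exactly to $\lambda=0$ (given that $\nu<0$ is preserved by Raychaudhuri and the dominant energy condition). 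Under (iii) and (ii), the rectangle of interest will stay in the region $r<1/\sqrt{\Lambda}$, which is essential for $-\Lambda r^2/3$ not to dominate the other terms in $1-\mu$.

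The body of the argument is a proof by contradiction. Assume that for every $u$ with $(u,v_1)\in{\cal Q}^+$ and $r(u,v_1)>0$ the sphere $(u,v_2)$ is \emph{not} trapped, i.e.\ $\lambda(u,v_2)>0$. In the resulting characteristic rectangle $\mathcal{R}\subset{\cal Q}^+$ I would establish the ``weak monotonicity'' estimates alluded to in the introduction: along ingoing directions a bound of the form $\lambda(u,v)\leq \lambda(0,v)\exp(C\Lambda\int_0^u r|\nu|\,du')$ (after dividing by a suitable factor involving $1-\mu$), and the analogous one for $|\nu|$ along outgoing directions. Since $r\leq r(0,v_2)<1/\sqrt{\Lambda}$ throughout $\mathcal{R}$, the exponential correction factors can be controlled provided $r$ does not decrease too fast, which itself follows from~(iii) and the smallness of $\delta_0$. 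This replaces Christodoulou's sharp monotonicity of $\kappa:=\lambda/(1-\mu)$ used in the $\Lambda=0$ setting.

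I would then quantify mass and radius along the two null lines $v=v_1$ and $v=v_2$. Integrating the $\partial_u\varpi$ equation between them and combining with the weak monotonicity, I expect to transfer the initial lower bound $\eta_0=2(\varpi(0,v_2)-\varpi(0,v_1))/r(0,v_2)$ into a comparable lower bound for $\varpi(u,v_2)-\varpi(u,v_1)$ over $\mathcal{R}$. On the other hand, the non-trapping assumption $\lambda>0$ together with the identity for $1-\mu$ and a Grönwall-type argument in $u$ would produce an upper bound for $r(u,v_2)-r(u,v_1)=\int_{v_1}^{v_2}\lambda\,dv$ that degenerates (logarithmically) as $u$ approaches the value for which $(u,v_1)$ hits $\overline{\Gamma}$, i.e.\ as $r(u,v_1)\to 0$. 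Plugging these two bounds into the mass-radius identity yields an inequality of the schematic form $\eta_0\lesssim \delta_0\log(1/\delta_0)$, contradicting~\eqref{mainAss} for $\delta_0<\delta_1$ sufficiently small. Continuity of $\lambda$ in $u$ then produces the first $u^*>0$ with $\lambda(u^*,v_2)=0$, and monotonicity of $\lambda$ in $u$ along $v=v_2$ (another consequence of Raychaudhuri) ensures $\lambda(u,v_2)<0$ for $u>u^*$; the condition $r(u^*,v_1)>0$ is precisely the statement that the contradiction occurs before $(u,v_1)$ hits $\overline{\Gamma}$.

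The hard part will be the weak monotonicity step. In the $\Lambda=0$ case, $1-\mu$ has a definite sign in the regular region and $\kappa$ is exactly monotone, which lets one integrate all quantities in closed form; with $\Lambda>0$, the sign of $\partial_r(1-\mu)$ in~\eqref{dr1-mu} is indefinite and every estimate must instead be closed by an iterated Grönwall scheme, producing the multiplicative $\exp(\Lambda\int r\,du')$-type error factors mentioned above. Showing that these errors remain close to $1$ uniformly over $\mathcal{R}$, without requiring $\delta_0$-bounds strictly stronger than $\delta_0\log(1/\delta_0)$, will be the delicate technical core of the proof; the logarithmic factor in~\eqref{mainAss}, as opposed to Christodoulou's sharper linear threshold, is the quantitative price paid for replacing exact monotonicity by its weak version.
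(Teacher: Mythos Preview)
Your overall strategy matches the paper's: contradiction assuming no trapped sphere along $v=v_2$, weak monotonicity $\lambda(u_2,v)\le e^{\alpha_0}\lambda(u_1,v)$ (and similarly for $\nu$) with $\alpha_0=\frac{2\Lambda r^2(0,v_2)/3}{1-\Lambda r^2(0,v_2)}$ coming exactly from the bound $-\partial_r(1-\mu)/(1-\mu)\le \alpha_0/r(0,v_2)$ under $r<1/\sqrt\Lambda$, and a final comparison of $\eta_0$ with a quantity of order $\delta_0\log(1/\delta_0)$.

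Two points need correction. First, the claim that the logarithm is the price for $\Lambda>0$ is wrong: Christodoulou's original $\Lambda=0$ criterion already has the threshold $c\,\delta_0\log(1/\delta_0)$, and the present result reduces to it as $\Lambda\to0$ (so $\alpha_0\to0$). The $e^{\alpha_0}$ factors are harmless multiplicative constants, not the source of the logarithm. Second, the mechanism you sketch---an upper bound on $r(u,v_2)-r(u,v_1)$ that ``degenerates logarithmically'' as $(u,v_1)\to\overline\Gamma$---is not how the paper (or Christodoulou) proceeds; that difference is essentially monotone, bounded by $e^{\alpha_0}(r(0,v_2)-r(0,v_1))$, and no logarithm enters there.

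The actual engine is a differential inequality for $\eta(u):=2(\varpi(u,v_2)-\varpi(u,v_1))/r(u,v_2)$. To obtain it one first needs a H\"older-type estimate $(\Delta\zeta)^2\le e^{\alpha_0}\eta\,\delta\,\nu\underline\kappa|_{v_2}$ for $\Delta\zeta:=\zeta(u,v_2)-\zeta(u,v_1)$, which is absent from your plan and is where the scalar field equation enters. Combining this with $\underline\kappa_2/\underline\kappa_1\ge e^\eta$ (from integrating $\partial_v\log\underline\kappa=\theta^2/(r\lambda)$) and computing $d\eta/du$ via $\partial_u\varpi=\zeta^2/(2\underline\kappa)$ gives, after the change of variable $x=r(u,v_2)/r(0,v_2)$, an ODE inequality $\eta'(x)+(1-f(x))\eta/x\le f(x)/x$ with $f\sim\delta_0/(x-x_0)$. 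Integrating yields $\eta(x)\ge e^{G(x)}(\eta_0-F(x))$ with $F(x_1)\sim\delta_0\log(1/\delta_0)$ at the minimizer $x_1$; the logarithm comes from $\int dy/(y-x_0)$. Since $1-\mu>0$ forces $\eta<1$ in the non-trapped region, evaluating at $x_1$ gives $\eta_0<E(\delta_0)\lesssim\delta_0\log(1/\delta_0)$, the contradiction. Your plan should be amended to include the $\zeta$-estimate and this ODE-for-$\eta$ structure; without them the ``transfer of the lower bound on $\varpi_2-\varpi_1$'' has no mechanism, because $\partial_u\varpi\le0$ in $\mathcal R$.
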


\begin{Remarks}
\label{rmks}
\,
\begin{enumerate}
 \item The presented criterion reduces formally to Christodoulou's criterion~\cite{Christodoulou91} by setting $\Lambda$ to zero. 
 \item We observe that in terms of the original Hawking mass~\eqref{1-mu0} we have 
\begin{eqnarray}
\nonumber
 \eta _0&=&\frac{2(m(0,v_2)-m(0,v_1))}{r(0,v_2)}+\frac{\Lambda}{3}\frac{r^3(0,v_1)-r^3(0,v_2)}{r(0,v_2)}\\
 \label{eta0Comp}
 &\leq & \frac{2(m(0,v_2)-m(0,v_1))}{r(0,v_2)}
\end{eqnarray}
and recall that the last quantity is the one used for $\Lambda=0$ in~\cite{Christodoulou91}.  
\item Note that, in particular, our theorem shows that a trapped surface can form from the evolution of data, posed on ${\cal C}^+_0$, which is ``arbitrarily far from being trapped''. By this we mean that, for any $\epsilon>0$, we can choose initial data satisfying the assumptions of the theorem and such that 
$$0\leq \sup_{v\leq v_2}\frac{m}{r}(0,v)<\epsilon\;.$$
\item
 Condition $(iii)$ above has no parallel in the  $\Lambda=0$ case and is related to the fact that, under general conditions~\cite[Section 3]{CostaProblem}, an {\em apparent cosmological horizon} - a causal hypersurface which is the union of marginally anti-trapped spheres, i.e., where  $\partial_ur=0$ and $\partial_vr>0$ - must intersect the initial cone ${\cal C}^+_0$ in the region $\sqrt{1/\Lambda}<r\leq \sqrt{3/\Lambda}$. 
\item Note that,  by invoking the results in~\cite{CostaMena}, condition $(i)$ above can be replaced by an appropriate smallest condition at the level of the scalar field.
\item Lower bounds for the mass and radius of the marginally trapped sphere can be obtained from~\eqref{massTrapped},~\eqref{r*Est} and~\eqref{alphaDef}.
\end{enumerate}
\end{Remarks}


\section{The Einstein-scalar field system with a positive cosmological constant, in spherical symmetry}\label{sectionSystem}

We will consider the Einstein-scalar field system in the presence of a positive cosmological constant $\Lambda$:
\begin{align}
& R_{\mu\nu} - \frac{1}{2} R\, g_{\mu\nu} + \Lambda g_{\mu\nu} = 2 T_{\mu\nu} \label{einstein} \, , \\
& T_{\mu\nu} = \partial_\mu \phi\, \partial_\nu \phi - \frac{1}{2} \partial_\alpha \phi\, \partial^\alpha \phi \, g_{\mu\nu} \;,
\label{energymomentumtensor} 
\end{align}
where a Lorentzian metric $g_{\mu\nu}$  is coupled to a  scalar field $\phi$ via the Einstein field equations~\eqref{einstein}, with energy-momentum tensor~\eqref{energymomentumtensor}. 

 We will work in spherical symmetry by assuming that ${\cal M}={\cal Q}\times \mathbb{S}^2$ and by requiring the existence of double-null coordinates $(u,v)$, on $\cal Q$, such that the spacetime metric takes the form~\eqref{metric}. Then 
the system~\eqref{einstein}-\eqref{energymomentumtensor} becomes
\begin{align}
& \partial_u\partial_vr = -\frac{\Omega^2}{4r} - \frac{\partial_ur\,\partial_vr}{r} +  \frac{\Omega^2 \Lambda r}{4} \label{wave_r} \, , \\
& \partial_u\partial_v\phi = -\,\frac{\partial_ur\,\partial_v\phi+\partial_vr\,\partial_u\phi}{r} \label{wave_phi} \, , \\
& \partial_v\partial_u\ln\Omega = -\partial_u\phi\,\partial_v\phi+\frac{\Omega^2}{4r^2}+\frac{\partial_ur\,\partial_vr}{r^2} \label{wave_Omega} \, , \\
& \partial_u \left(\Omega^{-2}\partial_u r\right) = -r\Omega^{-2}\left(\partial_u\phi\right)^2 \label{raychaudhuri_u} \, , \\
& \partial_v \left(\Omega^{-2}\partial_v r\right) = -r\Omega^{-2}\left(\partial_v\phi\right)^2\label{raychaudhuri_v} \, .
\end{align}
Since we will only probe the future of the null cone $u=0$, truncated at $v= v_2$, from now on we will redefine ${\cal Q}^+$, as defined in Theorem~\ref{mainThm}, to be
the set 
$${\cal Q}^+:=\{(u,v)\in{\cal Q}\,:\,u\geq 0\;\text{ and }0\leq v\leq v_2\}\;.$$

We define the Hawking mass $m=m(u,v)$ by
\begin{equation}
\label{1-mu0}
1-\mu:=1-\frac{2m}{r} =\partial^{\alpha}r\partial_{\alpha} r\;.
\end{equation}
It turns out to be convenient, both by its physical relevance and by its good monotonicity properties, to also introduce the renormalized Hawking mass $\varpi=\varpi(u,v)$ defined by
\begin{equation}
\label{piDef}
\varpi:=m-\frac{\Lambda}{6}r^3\;.
\end{equation}
Then we have 
\begin{equation}
\label{1-mu}
1-\mu= 1-\frac{2\varpi}{r}-\frac{\Lambda}{3}r^2\;,
\end{equation}
and by interpreting $1-\mu$ as a function of $(r,\varpi)$ we write 
\begin{equation}
\label{dr1-mu}
\partial_r(1-\mu)= \frac{2\varpi}{r^2}-\frac{2\Lambda}{3}r\;.
\end{equation}

In this paper we will mainly use a first order formulation of the Einstein-scalar field system, obtained by introducing the quantities:
\begin{align}
& \nu \, := \, \partial_u r \, ,\\
& \lambda \, := \, \partial_v r \, ,\\
& \theta \, := \, r \partial_v \phi \, ,\\
& \zeta \, := \, r \partial_u \phi \, .\\
\end{align}
Note, for instance, that in terms of these new quantities~\eqref{1-mu} gives
\begin{equation}
\label{1-mu2}
1-\mu=-4\Omega^{-2} \lambda\nu\,.
\end{equation}

Consequently, in the {\em non-trapped} region 
\begin{equation}
\label{RDef}
 {\cal R}:=\{(u,v)\in{\cal Q}^+\; : \; \lambda(u,u)>0\text{ and }\nu(u,v)<0\}\;,
\end{equation}
we are allowed to define both
\begin{equation}
\label{k}
\kappa:=-\frac{1}{4}\Omega^{2}\nu^{-1}
\end{equation}
and
\begin{equation}
\label{bark}
\underline\kappa:=-\frac{1}{4}\Omega^{2}\lambda^{-1}\;.
\end{equation}

It is then well known (see for instance~\cite{CGNS1}) that the Einstein-scalar field system with a cosmological constant satisifies the following overdetermined system of PDEs and algebraic equations: 
\begin{align}
& \partial_u \lambda \, = \, \partial_v \nu \, = \,  \lambda\nu \frac{\partial_r \left(1-\mu\right)}{1-\mu} \label{dr} \, , \\ & \partial_u \varpi \, = \, \frac{\zeta^2}{2\underline\kappa} \label{dupi} \, , \\ 
& \partial_v \varpi \, = \, \frac{\theta^2}{2\kappa}  \label{dvpi} \, , \\ 
& \partial_u \theta \, = \, - \frac{\zeta \lambda}{r} \, , \label{dt} \\
& \partial_v \zeta \, = \, - \frac{\theta \nu}{r} \, , \label{dz} \\
& \partial_v \underline\kappa \, = \, \frac{\underline\kappa \theta^2}{r\lambda} \label{dkbar} \, , \\
& \partial_u \kappa \, = \, \frac{\kappa \zeta^2}{r\nu} \label{dk} \, , \\
& \nu \, = \, \underline\kappa \left(1-\mu\right) \, , \label{kbarAlg} \\
& \lambda \, = \, \kappa \left(1-\mu \right) \, . \label{kAlg}
\end{align}

It will also be useful to note that in terms of the (original) Hawking mass~\eqref{dvpi} reads
\begin{equation}
\label{dvm}
\partial_v m \, = \, \frac{\Lambda}{2} r^2\lambda+\frac{\theta^2}{2\kappa}  \;.
\end{equation}

One of the assumptions of Theorem~\ref{mainThm} is the non-existence of anti-trapped spheres along the initial null cone $\{u=0\}$, up to $v\leq v_2$, i.e, that $\nu(0,v)<0$, for all $v\leq v_2$ (recall point $4$ of Remarks~\ref{rmks}). It is then a well known consequence of the Raychaudhuri equation~\eqref{raychaudhuri_u} that this sign is preserved by evolution along the ingoing direction, that is 
\begin{equation}
\label{signNu}
 \nu(u,v)<0\;\;,\; \text{ for all } (u,v)\in{\cal Q}^+\;.
\end{equation}
As an immediate consequence we get the global upper bound
\begin{equation}
\label{rBound}
 r(u,v)\leq r(0,v_2)\;\;,\; \text{ for all } (u,v)\in{\cal Q}^+\;.
\end{equation}
Most importantly, the  following extension criteria then follows  from~\cite[Corollary 5.5]{CGNS1}:
\begin{proposition}
 \label{propExt}
Under the previous conditions (including the sign condition~\eqref{signNu}), 
let $p\in \overline{{\cal Q}^+}$ be such that there exists $q\in J^-(p)$ for which
$${\cal D}:=\big(J^-(p)\cap J^+(q)\big)\setminus\{p\}\subset {\cal Q}^+\;.$$
If there exists $c>0$ such that
 $$ inf_{\cal D} \,r>c\;,  $$
then $p\in {\cal Q}^+\setminus{\Gamma}$\;.
\end{proposition}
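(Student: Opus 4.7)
The plan is a standard extension argument by contradiction with the maximality of $\mathcal{Q}^+$ as the maximal globally hyperbolic development. Assume $p\in\overline{\mathcal{Q}^+}\setminus \mathcal{Q}^+$. I would leverage the hypothesis $\inf_\mathcal{D} r\geq c>0$, together with the upper bound $r\leq r(0,v_2)$ from \eqref{rBound}, to derive uniform bounds on all first-order and algebraic quantities on $\mathcal{D}$, so that local well-posedness of the characteristic initial value problem applied at points approaching $p$ produces an extension past $p$, yielding a contradiction. The conclusion $p\notin \Gamma$ is then immediate since $r(p)\geq c>0$ while $r\equiv 0$ on $\Gamma$.

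The first step is to control the renormalized Hawking mass $\varpi$. Since $\nu<0$ throughout $\mathcal{Q}^+$ by \eqref{signNu}, the signs of $\kappa$ and $\underline\kappa$ in \eqref{k}-\eqref{bark} are determined by the sign of $\lambda$. In the non-trapped portion of $\mathcal{D}$ one has $\underline\kappa<0<\kappa$, so \eqref{dupi}-\eqref{dvpi} give $\partial_u \varpi\leq 0\leq \partial_v\varpi$; in any trapped portion both right-hand sides of \eqref{dupi}-\eqref{dvpi} are instead non-negative. Either way, integration along characteristics bounds $\varpi$ on $\mathcal{D}$ in terms of the data it sees on the past boundary of $\mathcal{D}$ near $q$. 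Combined with $c\leq r\leq r(0,v_2)$, formula \eqref{1-mu} then yields a uniform bound on $|1-\mu|$.

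Next I would bound $\kappa,\underline\kappa$ and hence $\lambda,\nu,\Omega$. Since $\nu<0$ and $r\geq c$, equation \eqref{dk} gives $\partial_u\ln\kappa=\zeta^2/(r\nu)\leq 0$, so $\kappa$ is bounded along ingoing rays by its initial data on $\{u=0\}$. An analogous integration of \eqref{dkbar} along outgoing rays controls $\underline\kappa$ wherever $\lambda$ has a definite sign. The algebraic relations \eqref{kbarAlg}-\eqref{kAlg} then bound $\lambda$ and $\nu$, and \eqref{1-mu2} recovers $\Omega$. The scalar field fluxes $\theta,\zeta$ are handled last, by treating the transport equations \eqref{dt}-\eqref{dz} as linear ODEs along characteristics and closing the estimates via a Gr\"onwall argument once $r,\lambda,\nu$ are under control.

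The hard part is precisely the difficulty flagged in the introduction: the loss of the global monotonicity that one enjoys for $\Lambda\leq 0$. The sign of $\partial_u\varpi$ changes between the trapped and non-trapped regions, so the mass estimate must be carried out region by region and then patched across $\{\lambda=0\}$; the same zero set also requires care when manipulating $\underline\kappa$ through \eqref{dkbar} and when inverting~\eqref{kAlg}. Once the uniform bounds are in place on $\mathcal{D}$, local existence allows the solution to be extended past $p$, contradicting maximality, and thus $p\in\mathcal{Q}^+\setminus\Gamma$.
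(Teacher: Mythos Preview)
The paper does not actually prove this proposition: it is stated immediately after the sentence ``the following extension criteria then follows from~\cite[Corollary 5.5]{CGNS1}'' and is simply invoked as a black box from that reference. Your proposal, by contrast, sketches a direct proof along the standard lines of such extension principles in spherical symmetry (uniform a priori bounds on all first-order quantities over $\mathcal{D}$, followed by local well-posedness to extend past $p$). This is precisely the strategy carried out in~\cite{CGNS1}, so your outline is morally the same as what the paper is citing, just made explicit rather than deferred.

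Two comments on the sketch itself. First, your treatment of $\underline\kappa$ is the weak point: $\underline\kappa=-\Omega^2/(4\lambda)$ blows up on $\mathcal{A}=\{\lambda=0\}$, so one cannot literally ``bound $\underline\kappa$'' across the apparent horizon and then recover $\nu$ from~\eqref{kbarAlg}. The robust route is to work instead with the Raychaudhuri quantities $\Omega^{-2}\nu$ and $\Omega^{-2}\lambda$ directly via~\eqref{raychaudhuri_u}--\eqref{raychaudhuri_v}, which are monotone regardless of the sign of $\lambda$, together with an energy-type estimate controlling $\int \theta^2/\lambda$ and $\int \zeta^2/\nu$; this is how~\cite{CGNS1} handles the crossing of $\mathcal{A}$. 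Second, your mass estimate is fine in spirit but note that $\partial_u\varpi$ does not merely change sign between $\mathcal{R}$ and $\mathcal{T}$: in $\mathcal{R}$ one has $\partial_u\varpi\leq 0$ and $\partial_v\varpi\geq 0$, which already gives two-sided bounds on $\varpi$ in $\mathcal{D}\cap\mathcal{R}$ from the data at $q$, and the bound then propagates into $\mathcal{T}$ using $\partial_u\varpi\geq 0$ there. With these adjustments your outline is correct.
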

\begin{Remark}
Observe that if we consider $p=i^+$ in Schwarzschild de Sitter then we clearly have $ inf_{\cal D} \,r>c>0$ but this does not, and cannot, imply that such solution can be extended to $i^+$. The reason why the extension criteria in Proposition~\ref{propExt} does not apply here can be traced back to the fact that  $\nu$ is not strictly negative in any domain $\cal D$, of the form prescribed by Proposition~\ref{propExt}, since such a domain must alway contain a portion of the cosmological horizon.   
\end{Remark}

It is also useful to consider the {\em marginally-trapped} region
\begin{equation}
 \label{apparentDef}
 {\cal A}:=\{(u,v)\in {\cal Q}^+\;: \; \lambda(u,v)=0\}\;,
\end{equation}
and the {\em trapped} region
\begin{equation}
 \label{apparentDef}
 {\cal T}:=\{(u,v)\in {\cal Q}^+\;:\; \lambda(u,v)<0\}\;.
\end{equation}

We then have the following: 
\begin{proposition}
 \label{propGlobal}
If non-empty, $\cal A$ is a $C^1$ curve such that: 
\begin{equation}
 \label{J+A}
 J^+(\cal A) \setminus {\cal A} \subset \cal T\;,
\end{equation}
\begin{equation}
 \label{J-A}
 J^-(\cal A) \setminus {\cal A} \subset \cal R\;,
\end{equation}
and 
\begin{equation}
 \label{Rpast}
 J^-(\cal R) \subset \cal R\;.
\end{equation}
Moreover
\begin{equation}
 \label{Aaxis}
\Gamma\subset \cal R\;,
\end{equation}
and, consequently,
\begin{equation}
 \label{J-Gamma}
 {J}^-(\Gamma)\subset \cal R\;.
\end{equation}
\end{proposition}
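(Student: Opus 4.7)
The plan is to combine two monotonicities for the outgoing expansion $\lambda$. From the ingoing Raychaudhuri equation~\eqref{raychaudhuri_v}, $\Omega^{-2}\lambda$ is non-increasing in $v$ at fixed $u$. Rewriting the wave equation~\eqref{wave_r} as
\begin{equation*}
\partial_u(r\lambda) \, = \, -\frac{\Omega^2}{4}(1-\Lambda r^2),
\end{equation*}
and combining~\eqref{rBound} with hypothesis $(ii)$ of Theorem~\ref{mainThm} to get $r(u,v) \leq r(0,v_2) < 1/\sqrt{\Lambda}$ throughout $\mathcal{Q}^+$, one sees that $r\lambda$ is \emph{strictly} decreasing in $u$ at fixed $v$. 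This is the crucial new ingredient for $\Lambda>0$, and all the containment claims will follow by propagating signs of $\lambda$ using these two monotonicities.

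For~\eqref{J+A}, I would take $(u,v) \in J^+(\mathcal{A}) \setminus \mathcal{A}$ and $(u_0,v_0) \in \mathcal{A}$ with $u_0 \leq u$ and $v_0 \leq v$. If $u_0 < u$, the strict $u$-decrease of $r\lambda$ starting from $r\lambda(u_0,v_0)=0$ forces $\lambda(u,v_0) < 0$, and the $v$-monotonicity then yields $\lambda(u,v) < 0$. If $u_0=u$, then $v_0<v$, and the $v$-monotonicity alone gives $\lambda(u,v) \leq 0$, which is strict since $(u,v) \notin \mathcal{A}$. For~\eqref{J-A} and~\eqref{Rpast} I argue contrapositively: assuming $\lambda \leq 0$ (and strict when off $\mathcal{A}$) at the past point $(u,v)$, propagate forward via $v$-monotonicity to $(u,v_0)$ and then via the strict $u$-monotonicity of $r\lambda$ to $(u_0,v_0)$, producing $\lambda<0$ there; this contradicts $\lambda=0$ on $\mathcal{A}$ or $\lambda>0$ on $\mathcal{R}$. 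Together with~\eqref{signNu}, which supplies $\nu<0$ for free, this gives the inclusions.

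For~\eqref{Aaxis} I would appeal to the standard regularity of a smooth spherically symmetric spacetime at its regular center, which forces $m \to 0$, hence $1-\mu \to 1$, on $\Gamma$; combining with $\nu<0$ and~\eqref{1-mu2} gives $\lambda>0$ on $\Gamma$, so $\Gamma \subset \mathcal{R}$. In particular $\mathcal{A}\cap\Gamma = \emptyset$, so $r>0$ on $\mathcal{A}$. The containment~\eqref{J-Gamma} is then immediate from~\eqref{Aaxis} and~\eqref{Rpast}. For the $C^1$ statement, note that on $\mathcal{A}$ equation~\eqref{wave_r} reduces to $\partial_u\lambda = \frac{\Omega^2}{4r}(\Lambda r^2 - 1) \neq 0$; since $\lambda$ is smooth, the implicit function theorem exhibits $\mathcal{A}$ locally as a smooth graph $u = u_{\mathcal{A}}(v)$.

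The main potential obstacle is precisely the strict bound $r<1/\sqrt{\Lambda}$ throughout $\mathcal{Q}^+$, which is what makes $\partial_u(r\lambda)$ strictly negative and restores the monotonicity that would be automatic when $\Lambda=0$. Without hypothesis $(ii)$, or without hypothesis $(iii)$ preventing anti-trapped spheres on the initial cone (both feeding into the global bound~\eqref{rBound}), the sign of $\partial_u(r\lambda)$ could flip, as emphasised in the introduction, and the sign-propagation arguments above would break down.
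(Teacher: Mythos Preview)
Your proof is correct and in fact a bit cleaner than the paper's. The key difference is the monotone quantity you use in the $u$-direction: the paper works with $\partial_u\lambda=\nu\kappa\,\partial_r(1-\mu)$ and then has to argue separately that $\partial_r(1-\mu)>0$ both on $\mathcal{A}$ (via~\eqref{remassTrapped}) and in $\mathcal{T}$ (via $1-\mu<0$ together with $r<1/\sqrt{\Lambda}$), whereas your identity $\partial_u(r\lambda)=-\tfrac{\Omega^2}{4}(1-\Lambda r^2)$ gives strict monotonicity of $r\lambda$ in one stroke, with no case split. Both routes hinge on exactly the same input, namely hypothesis~$(ii)$ and~\eqref{rBound} giving $r<1/\sqrt{\Lambda}$ throughout $\mathcal{Q}^+$, so the underlying mechanism is identical; yours just packages it more economically. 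For~\eqref{Aaxis} you also take a more direct path: rather than the paper's contradiction with the timelike character of $\Gamma$, you read $\lambda>0$ straight off~\eqref{1-mu2} using $1-\mu\to 1$ at a regular centre. One small remark: when you pass from $r\lambda<0$ to $\lambda<0$ you implicitly use $r>0$, but this is automatic since $r\lambda<0$ forces $r\neq 0$, so there is no circularity with~\eqref{Aaxis}.
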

\begin{proof}
%

Assume $(u^*,v^*)\in{\cal A}$.  
Then~\eqref{1-mu2} implies that $1-\mu(u^*,v^*)=0$ which in turn gives rise to the identities
\begin{equation}
 \label{massTrapped}
 m(u^*,v^*)=\frac{r(u^*,v^*)}{2}\;,
\end{equation}
and 
\begin{equation}
 \label{remassTrapped}
\varpi(u^*,v^*)=\frac{r(u^*,v^*)}{2}-\frac{\Lambda}{6}r^3(u^*,v^*)\;.
\end{equation}
The last identity together with~\eqref{rBound} implies that 
$$\partial_r(1-\mu)(u^*,v^*)=\frac{1}{r(u^*,v^*)}-\Lambda r(u^*,v^*) >0\;.$$
We will also need the fact that   
\begin{equation}
\label{kgeq0}
 \kappa > 0\;,
\end{equation}
which follows from~\eqref{k} and~\eqref{signNu}.
We then use~\eqref{wave_r},~\eqref{1-mu2} and~\eqref{k} to write
$$\partial_u \lambda =\nu \kappa \partial_r(1-\mu)$$ 
and conclude that 
$\partial_u\lambda (u^*,v^*)<0$; in particular, this shows that $\cal A$ is a continuously differentiable curve. The same reasoning shows that once $\lambda<0$, we get $1-\mu<0$, which implies that $\partial_r(1-\mu)>0$ and consequently $\partial_u\lambda<0$. We can then conclude that
\begin{equation}
\label{lambdaNeg}
 \lambda(u,v^*) <0\;\;,\;\text{ for all } u>u^*\;.
\end{equation}
Just as in the discussion leading to~\eqref{signNu} the fact that 
\begin{equation}
\label{lambdaNed}
 \lambda(u^*,v^*) \leq 0 \Rightarrow  \lambda(u^*,v) \leq 0 \;\;,\;\text{ for all } v>v^*\;,
\end{equation}
is an immediate consequence of the Raychaudhuri equation~\eqref{raychaudhuri_v}. 
Consequently both~\eqref{J+A} and~\eqref{J-A} follow. 

Now assume there existes $(u^*,v^*)\in {\cal A}\cap \Gamma$, i.e., $r(u^*,v^*)=\lambda(u^*,v^*)=0$: then, in view of~\eqref{lambdaNed} and the fact that $r\geq 0$, we get $r(u^*,v)=0$, for $v\geq v^*$, which is in contradiction with the causal character of the center of symmetry $\Gamma$. The remanning conclusions then follow immediately. 

\end{proof}
%

%
\section{Proof of Theorem~\ref{mainThm}}\label{sectionProof}

\begin{proof}
The proof follows by obtaining several estimates valid in the region $\cal R$, as defined in~\eqref{RDef}, that will allow us to conclude that, under the conditions of Theorem~\ref{mainThm},  the ingoing line $v=v_2$ has to exit $\cal R$ before leaving ${\cal Q}^+$.  So from now one, and unless otherwise stated, assume that we are in $\cal R$.

We start by noting that~\eqref{1-mu0} implies that  
\begin{equation}
 m|_{\Gamma}=0\;,
\end{equation}
 and then, after recalling~\eqref{kgeq0},~\eqref{dvm} shows that $\partial_vm\geq 0$  and since ${\cal Q}^+ \subset J^+(\Gamma)$ we obtain 
\begin{equation}
 m\geq 0\;.
\end{equation}
Consequently  
\begin{equation}
\label{less1}
0<1-\mu\leq 1  \;,
\end{equation}
with the first inequality, valid in $\cal R$, as a consequence of~\eqref{1-mu2}. 

The previous together with~\eqref{dkbar}, the monotonicity of the radial function in $v$ and~\eqref{dvpi}  gives (recall~\eqref{kAlg})
\begin{eqnarray}
 \nonumber
 \log\left(\frac{\underline\kappa(u,v_2)}{\underline\kappa(u,v_1)}\right)
 &=&
 \nonumber
 \int_{v_1}^{v_2} \frac{ \theta^2}{r\lambda}(u,\tilde v)d\tilde v \\ 
 &\geq& 
 \nonumber
\frac{2}{r(u,v_2)} \int_{v_1}^{v_2} \frac{1}{2}\frac{ \theta^2}{\kappa}(u,\tilde v)d\tilde v \\ 
&\geq&
\nonumber
\frac{2}{r(u,v_2)} \int_{v_1}^{v_2} \partial_v\varpi(u,\tilde v)d\tilde v \\ 
&\geq& 
\label{etaDef}
\frac{2\left(\varpi(u,v_2)-\varpi(u,v_1)\right)}{r(u,v_2)}:=\eta(u)\;,
\end{eqnarray}
which we save for later use as 
\begin{equation}
\label{kBarQ}
 \frac{\underline\kappa(u,v_2)}{\underline\kappa(u,v_1)}\geq e^{\eta}\;.
\end{equation}

Now  notice that 
\begin{equation}
 \varpi\geq 0
\end{equation}
 since $\varpi|_{\Gamma}=(m-\Lambda r^3/6)|_{\Gamma}=0$ and, in view of~\eqref{dvpi}, $\partial_v\varpi\geq 0$. We then obtain the estimate (recall~\eqref{rBound})
\begin{equation}
\label{dr1-muEst}
\partial_r(1-\mu)(u,v)\geq -\frac{2\Lambda}{3} r(u,v)\geq -\frac{2\Lambda}{3} r(0,v_2)\;.
\end{equation}

Now in the region 
$${\cal R}_+ :=\{(u,v)\in {\cal R}\,:\, -\partial_r(1-\mu)(u,v)\geq 0\}=\{(u,v)\in {\cal R}\,:\, \frac{2\varpi}{r}\leq \frac{2\Lambda}{3} r^2\}$$
 we have
\begin{equation}
1 -\mu\geq 1-{\Lambda}r^2\geq 1-{\Lambda}r^2(0,v_2)
\end{equation}
 so that by imposing the condition $r(0,v_2)<\frac{1}{\sqrt{\Lambda}}$ (which we recall is one of the hypothesis of Theorem~\ref{mainThm}) we have  
\begin{equation}
\label{drQuotient}
\frac{-\partial_r(1-\mu)}{1-\mu} \leq \frac{\frac{2\Lambda}{3} r(0,v_2)}{ 1-{\Lambda}r^2(0,v_2)}\;,
\end{equation}
 first in ${\cal R}_+$ and then in the entire region $\cal R$, since the left hand side is negative in ${\cal R}\setminus {\cal R}_+$ and the right hand side is positive.
 
 We can now integrate~\eqref{dr} to obtain, for $0\leq u_1\leq u_2$,  
\begin{eqnarray}
 \nonumber
 \log\left(\frac{\lambda(u_2,v)}{\lambda(u_1,v)}\right)&=&\int_{u_1}^{u_2}{-\nu}\frac{-\partial_r(1-\mu)}{1-\mu} (u,v)du \\
 &\leq&
  \nonumber
 \frac{\frac{2\Lambda}{3} r(0,v_2)}{ 1-{\Lambda}r^2(0,v_2)} \int_{u_1}^{u_2}{-\nu}(u,v)du \\
  &=&
   \nonumber
 \frac{\frac{2\Lambda}{3} r(0,v_2)}{ 1-{\Lambda}r^2(0,v_2)} \left(r(u_1,v)-r(u_2,v)\right) \\
 &\leq&
 \label{alphaDef}
 \frac{\frac{2\Lambda}{3} r^2(0,v_2)}{ 1-{\Lambda}r^2(0,v_2)}=:\alpha_0\;,
\end{eqnarray}
which we save for future use in the form 
\begin{equation}
 \label{lambdaMonot}
 \lambda(u_2,v) \leq e^{\alpha_0}  \lambda(u_1,v) \;\;,\;\text{ for all } 0\leq u_1\leq u_2\;. 
\end{equation}
Integrating the last inequality, in $v$, readily gives
\begin{equation}
 \label{rDif}
r(u_2,v_2)-r(u_2,v_1) \leq e^{\alpha_0} \big( r(u_1,v_2)-r(u_1,v_1) \big)\;,
\end{equation}
for all $0\leq u_1\leq u_2$ and $0\leq v_1\leq v_2$.

Another consequence of~\eqref{lambdaMonot} and condition $(i)$ of Theorem~\ref{mainThm} is the following: let 
$(\bar u_1,v_1)\in \overline{\Gamma}\subset \overline{{\cal Q}^+}\setminus {\cal Q}^+$, then 
\begin{equation}
 \label{limRv1}
 \lim_{(u,v)\rightarrow (\bar u_1,v_1)\;,\; (u,v)\in J^-(\bar u_1,v_1)} r(u,v)=0\;.
\end{equation}
In fact, under such conditions we can consider a parameterization of $\Gamma$ of the form $(u,v_{\Gamma}(u))$, with $v_{\Gamma}(u)\rightarrow v_1$, as $u\rightarrow \bar u_1$. Then
 by integrating~\eqref{dr} from $\Gamma$, while recalling~\eqref{J-Gamma}, we get  
\begin{eqnarray*}
 r(u,v_1) = \int_{v_{\Gamma}(u)}^{v_1} \lambda(u,v)dv   
\leq e^{\alpha_0}\left[\sup_{v\leq v_1}\lambda(0,v)\right](v_1-v_{\Gamma}(u))\rightarrow 0\;, 
\end{eqnarray*}
 as $u\rightarrow u_1$, and~\eqref{limRv1} then follows from the monotonicity of $r$.
 
Relying once again on~\eqref{dr} and~\eqref{drQuotient}  gives
\begin{eqnarray}
 \nonumber
 \log\left(\frac{-\nu(u,v_2)}{-\nu(u,v_1)}\right)&=&\int_{v_1}^{v_2}{\lambda}\frac{\partial_r(1-\mu)}{1-\mu} (u,v)dv \\
 &\geq&
  \nonumber
 -\frac{\frac{2\Lambda}{3} r(0,v_2)}{ 1-{\Lambda}r^2(0,v_2)} \int_{v_1}^{v_2}{\lambda}(u,v)dv \\
  &=&
   \nonumber
 -\frac{\frac{2\Lambda}{3} r(0,v_2)}{ 1-{\Lambda}r^2(0,v_2)} \left(r(u,v_2)-r(u,v_1)\right) \\
 &\geq&
-\alpha_0\;,
\end{eqnarray}
 which implies  
\begin{equation}
 \label{nuMonot}
 -\nu(u,v_1) \leq - e^{\alpha_0}  \nu(u,v_2) \;\;,\;\text{ for all } 0\leq v_1\leq v_2\;. 
\end{equation}

We now turn to the scalar field's derivative and  using, in sequence,~\eqref{dz},~\eqref{kAlg}, H\"older's inequality,~\eqref{dvpi},~\eqref{kbarAlg},~\eqref{nuMonot} and the sign in~\eqref{dkbar}
\begin{eqnarray*}
 \left[\zeta(v_2,u)-\zeta(v_1,u)\right]^2
 &=&\left(\int_{v_1}^{v_2} \frac{-\nu\theta}{r}(u,v)dv\right)^2 \\
 &=&
 \left(\int_{v_1}^{v_2} \frac{-\nu\theta}{r} \frac{\sqrt{\lambda}}{\sqrt{\kappa(1-\mu)}}(u,v)dv\right)^2 \\ 
 &\leq &
 \int_{v_1}^{v_2} \frac{\theta^2}{\kappa}(u,v)dv\,  \int_{v_1}^{v_2} \frac{\nu^2\lambda}{r^2(1-\mu)}(u,v)dv\\
 &\leq&
 2\left[\varpi(u,v_2)-\varpi(u,v_1) \right] \int_{v_1}^{v_2} \nu \underline\kappa \frac{\lambda}{r^2}(u,v)dv \\
  &\leq&
  \eta  e^{\alpha_0} [r \nu \underline\kappa](u,v_2)  \int_{r(u,v_1)}^{r(u,v_2)} \frac{1}{r^2} dr \;,\\ 
\end{eqnarray*}
which, after introducing the quantity 
\begin{equation}
 \label{deltaDef}
 \delta=\delta(u):= \frac{r(u,v_2)-r(u,v_1)}{r(u,v_1)}\;, 
\end{equation}
reads
\begin{equation}
\label{zetaEstimate}
  \left[\zeta(v_2,u)-\zeta(v_1,u)\right]^2\leq  e^{\alpha_0}  \eta(u) \delta (u)  \nu(u,v_2)\underline\kappa(u,v_2) \;. 
  \end{equation}

To obtain our next estimate we introduce the notations
$$f_i=f_i(u):=f(u,v_i)\;\;, \; i=1,2\;,$$
and
$$\Delta f=f_2-f_1\;.$$
We also define 
\begin{equation}
 \label{uEta}
 u_{\eta}= \sup\{u\geq0\,:\,\exists v\geq 0 \;,(u,v)\in {\cal R} \text{ and }  \eta(u)>0 \}\;, 
 \end{equation}
 with the understanding that $u_{\eta}=+\infty$ if the defining set is empty. Note that, in view of~\eqref{mainAss}, $u_{\eta}>0$. 

The following estimates will be restricted to the set 
${\cal R}\cap \{u<u_{\eta}\}$;  later we will see that this is in fact the entire $\cal R$, i.e., that $u_{\eta}=+\infty$.
That being said, using~\eqref{kBarQ} and recalling that $\underline\kappa<0$, we have 
\begin{eqnarray*}
 \frac{\zeta^2}{\underline\kappa}(u,v_2) - \frac{\zeta^2}{\underline\kappa}(u,v_1) &=&  \Delta (\underline\kappa^{-1}\zeta^2)\\
 &=&    \underline\kappa_2^{-1}\left[\zeta_2^2-\underline\kappa_2\underline\kappa_1^{-1}\zeta_1^2\right]   \\
 &=& \underline\kappa_2^{-1}\left[(\Delta \zeta)^2+2\zeta_1\zeta_2-\zeta_1^2-\underline\kappa_2\underline\kappa_1^{-1}\zeta_1^2\right] \\
 &=& \underline\kappa_2^{-1}\left[(\Delta \zeta)^2+2(\Delta\zeta) \zeta_1 +\zeta_1^2-\underline\kappa_2\underline\kappa_1^{-1}\zeta_1^2\right] \\
&\geq& \underline\kappa_2^{-1}\left[(\Delta \zeta)^2+2(\Delta\zeta) \zeta_1 -(e^{\eta}-1)\zeta_1^2\right] \\
&=& \underline\kappa_2^{-1}\left[(\Delta \zeta)^2+(e^{\eta}-1)\left(\frac{2(\Delta\zeta) \zeta_1}{e^{\eta}-1} -\zeta_1^2\right)\right] \\
&\geq& \underline\kappa_2^{-1}\left[(\Delta \zeta)^2+(e^{\eta}-1)\left(\frac{(\Delta\zeta)^2}{(e^{\eta}-1)^2}+ \zeta_1^2 -\zeta_1^2\right)\right]\;.
\end{eqnarray*}
In conclusion, in ${\cal R}\cap \{u<u_{\eta}\}$, we have
\begin{equation}
\label{zetaKappa}
  \frac{\zeta^2}{\underline\kappa}(u,v_2) - \frac{\zeta^2}{\underline\kappa}(u,v_1)
   \geq \frac{1}{\underline\kappa(u,v_2)}\left(1+\frac{1}{e^{\eta}-1}\right) \left[\zeta(u,v_2) - \zeta(u,v_1)\right]^2\;.
\end{equation}

Noting that we can rewrite~\eqref{etaDef} as
$$\eta=\frac{2\Delta\varpi}{r_2}$$
we have, using~\eqref{dupi},~\eqref{zetaKappa},~\eqref{zetaEstimate},
\begin{eqnarray*}
 \frac{d\eta}{du}&=&\frac{1}{r_2}\Delta\left(\frac{\zeta^2}{\underline\kappa}\right)-\frac{2\Delta\varpi}{r_2^2}\nu_2\\
 &\geq& \frac{1}{r_2\underline\kappa_2} \left(1+\frac{1}{e^{\eta}-1}\right) (\Delta\zeta)^2+\frac{-\nu_2}{r_2}\eta\\
  &\geq& \frac{1}{r_2\underline\kappa_2} \left(1+\frac{1}{e^{\eta}-1}\right) e^{\alpha_0} \eta \delta   \underline\kappa_2 \nu_2 +\frac{-\nu_2}{r_2}\eta
\end{eqnarray*}
which we write as 
\begin{equation}
\label{dedu}
 \frac{d\eta}{du}\geq \frac{-\nu(u,v_2)}{r(u,v_2)}\eta\left[ 1- e^{\alpha_0} \delta \left(1+\frac{1}{e^{\eta}-1}\right)\right]\;.
\end{equation}

Exactly as in~\cite{Christodoulou91} we introduce a new variable
\begin{equation}
 x= \frac{r(u,v_2)}{r(0,v_2)}\;,
\end{equation}
which, after noting that $\delta_0=\delta(0)$ and recalling~\eqref{rDif}, we use to obtain 
\begin{eqnarray*}
 \delta&=&\frac{r(u,v_2)-r(u,v_1)}{r(u,v_1)} =\frac{r(u,v_2)-r(u,v_1)}{r(u,v_2)-\left[(r(u,v_2)-r(u,v_1)\right]}  \\
 &\leq & \frac{e^{\alpha_0}\left[(r(0,v_2)-r(0,v_1)\right]}{r(u,v_2)-e^{\alpha_0}\left[(r(0,v_2)-r(0,v_1)\right]} \\
  &= & \frac{e^{\alpha_0} \delta_0}{\frac{r(u,v_2)}{r(0,v_1)}-e^{\alpha_0} \delta_0} 
  =\frac{e^{\alpha_0} \delta_0}{x\frac{r(0,v_2)}{r(0,v_1)}-e^{\alpha_0} \delta_0} \;,
\end{eqnarray*}
from which we get 
\begin{equation}
\label{deltaEst}
 \delta\leq \frac{e^{\alpha_0} \delta_0}{x(\delta_0+1)-e^{\alpha_0} \delta_0}\;.
\end{equation}

In the new variable we have
$$\frac{d\eta}{dx}=\frac{d\eta}{du}\frac{r(0,v_2)}{\nu(u,v_2)}$$
and then, using~\eqref{dedu}, the elementary inequality $e^{\eta}-1\geq \eta$ and~\eqref{deltaEst}, we arrive at
\begin{equation}
\label{difIneq}
\frac{d\eta}{dx} +\frac{(1-f)}{x}\eta\leq\frac{f}{x}\;,
\end{equation}
with
\begin{equation}
f:=\frac{e^{2\alpha_0} \delta_0}{x(\delta_0+1)-e^{\alpha_0} \delta_0}\;. 
\end{equation}

Let
\begin{equation}
 \label{x0Def}
 x_0:=\frac{e^{\alpha_0}\delta_0}{1+\delta_0}\;,
\end{equation}
which will be made smaller than unity by future restrictions on $\delta_0$.
Then we can easily integrate~\eqref{difIneq} and obtain, for $x\in(x_0,1]$,
\begin{equation}
\label{etaEst}
 \eta(x)\geq e^{G(x)}(\eta_0-F(x))\;,
\end{equation}
where
\begin{equation}
 G(x)=\int_x^1 \frac{1-f(y)}{y}dy\;,
\end{equation}
and 
\begin{equation}
 F(x)=\int_x^1 \frac{f(y)}{y}e^{-G(y)}dy\;.
\end{equation}

Now let 
$$H(x)=F(x)+e^{-G(x)}$$
and note that 
$$\frac{dH}{dx} = \frac{e^{-G}}{x}(1-2f)\;.$$
We can then conclude that 
\begin{equation}
\label{x1}
 x_1=x_1(\delta_0)=\frac{(2e^{\alpha_0}+1)e^{\alpha_0}\delta_0}{1+\delta_0}
\end{equation}
is the only critical point of $H$ and that moreover 
$$H'(x)<0\;\;,\; \text{ for } x<x_1\;,$$
 and 
$$  H'(x)>0\;\;,\; \text{ for } x>x_1\;.$$

It is now essential to note that $\alpha_0$ decreases with $\delta_0$ (just recall~\eqref{alphaDef} and that $r(0,v_2)=(1+\delta_0)r(0,v_1)$); then it becomes clear that  there exists 
$\delta_1= \delta_1(\Lambda)$ such that 
\begin{equation}
x_1(\delta_0)<1\;\;, \text{ for all } \delta_0\leq  \delta_1\;.
\end{equation}
In conclusion, for $\delta_0<\delta_1$, $x_1$ is the global minimum of $H$ in $(x_0,1]$.  We then define 
\begin{equation}
 E(\delta_0):=H(x_1(\delta_0))\;.
\end{equation}

A direct computation gives rise to the elementary formula
\begin{equation}
\label{GExplicit}
G(x)=-\log(x)-e^{\alpha_0}\log\left[\frac{(\delta_0+1-e^{\alpha_0}\delta_0)x}{(\delta_0+1)x-e^{\alpha_0}\delta_0}\right]\;,
\end{equation}
from which we immediately see that, for $\delta_0<\delta_1$ (after decreassing $\delta_1$ if necessary), we have, for appropriate choices of $0<c_1<C_1$,
\begin{equation}
\label{e^G1Estimate}
c_1 \delta_0 \leq e^{-G(x_1(\delta_0))}\leq C_1 \delta_0\;.
\end{equation}
However we are unaware of the existence of an elementary formula for $F$; note, in contrast, that in the $\Lambda=0$ case we obtain $e^{\alpha_0}=1$ and then such a formula can be easily obtained~\cite{Christodoulou91}.

Nonetheless we have the following estimate: 
\begin{lemma}
 \label{lemFEstimate}
 There are positive constants $c_1,C_1$ and $\delta_1$, such that for   $\delta_0<\delta_1$
\begin{equation}
\label{FEstimate}
c_1\delta_0\log(1/\delta_0)\leq  F(x_1(\delta_0)) \leq C_1\delta_0\log(1/\delta_0)\;.
\end{equation}
\end{lemma}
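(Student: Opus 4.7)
My plan is to reduce $F(x_1(\delta_0))$ to an explicit one--variable integral by combining the formula \eqref{GExplicit} for $G$ with a natural affine change of variables, and then estimate the resulting integral by elementary comparison with $\int dw/w$. First I would substitute the explicit expression
\[
e^{-G(y)}=y\left[\frac{(\delta_0+1-e^{\alpha_0}\delta_0)y}{(\delta_0+1)y-e^{\alpha_0}\delta_0}\right]^{e^{\alpha_0}}
\]
into the definition of $F$ and combine it with the formula for $f$. This gives
\[
F(x)=e^{2\alpha_0}\delta_0(\delta_0+1-e^{\alpha_0}\delta_0)^{e^{\alpha_0}}\int_x^1 \frac{y^{e^{\alpha_0}}}{\bigl[(\delta_0+1)y-e^{\alpha_0}\delta_0\bigr]^{1+e^{\alpha_0}}}\,dy.
\]

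Next I would perform the affine substitution $w=\bigl((\delta_0+1)y-e^{\alpha_0}\delta_0\bigr)/(e^{\alpha_0}\delta_0)$, which sends $y=x_0$ to $w=0$, $y=x_1(\delta_0)$ to $w=2e^{\alpha_0}$ (equivalently, the critical condition $f=1/2$), and $y=1$ to $W_0:=(1+\delta_0-e^{\alpha_0}\delta_0)/(e^{\alpha_0}\delta_0)$. A direct computation transforms the integral into
\[
F(x_1(\delta_0))=A(\delta_0)\int_{2e^{\alpha_0}}^{W_0}\frac{(w+1)^{e^{\alpha_0}}}{w^{1+e^{\alpha_0}}}\,dw,\qquad A(\delta_0):=\frac{e^{2\alpha_0}\delta_0(\delta_0+1-e^{\alpha_0}\delta_0)^{e^{\alpha_0}}}{(\delta_0+1)^{1+e^{\alpha_0}}}.
\]

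To estimate the remaining integral I would rewrite the integrand as $(1+1/w)^{e^{\alpha_0}}/w$. On $[2e^{\alpha_0},W_0]$ this lies between $1/w$ and $\bigl(1+(2e^{\alpha_0})^{-1}\bigr)^{e^{\alpha_0}}/w$, so the integral is comparable to $\log(W_0/(2e^{\alpha_0}))$, with two--sided constants depending only on $\alpha_0$. Since $W_0=\frac{1}{e^{\alpha_0}\delta_0}(1+O(\delta_0))$ as $\delta_0\to 0$, after possibly reducing $\delta_1$ one has $\log(W_0/(2e^{\alpha_0}))\asymp \log(1/\delta_0)$. Combined with the obvious bound $A(\delta_0)\asymp \delta_0$, uniform on $(0,\delta_1]$, this yields \eqref{FEstimate}.

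The only subtle point in this scheme is the bookkeeping of the exponent $e^{\alpha_0}$: by \eqref{alphaDef} the quantity $\alpha_0$ depends on the initial data but is bounded uniformly above (and away from zero) once $r(0,v_2)<1/\sqrt\Lambda$ is fixed, so all comparison constants can be absorbed into the universal constants $c_1,C_1$. Note that in the $\Lambda=0$ limit one has $\alpha_0=0$, the integrand collapses to $1/w$, and the integral is elementary, as in~\cite{Christodoulou91}; it is precisely the nontrivial power $e^{\alpha_0}$ which prevents a closed--form expression here and forces the comparison argument.
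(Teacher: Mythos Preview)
Your argument is correct and, in fact, cleaner than the paper's. Both proofs begin from the same explicit formula for $F(x)$ obtained from~\eqref{GExplicit}, but then diverge. The paper bounds the non-integer exponent $e^{\alpha_0}$ above by the nearest integer $n\ge 2$ (using $y/(y-a)>1$), expands $y^n=\bigl((y-a)+a\bigr)^n$ via the binomial theorem, and integrates term by term so that the logarithmic contribution $\log(y-a)$ appears explicitly; the lower bound is obtained analogously by replacing $e^{\alpha_0}$ by $1$. Your affine substitution $w=\bigl((\delta_0+1)y-e^{\alpha_0}\delta_0\bigr)/(e^{\alpha_0}\delta_0)$ bypasses this integer approximation entirely: it reduces the integrand to $(1+1/w)^{e^{\alpha_0}}/w$, which on $[2e^{\alpha_0},W_0]$ is sandwiched between $1/w$ and a fixed multiple of $1/w$, giving the two-sided bound $\log(W_0/(2e^{\alpha_0}))\asymp\log(1/\delta_0)$ directly. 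The benefit of your route is that it avoids the somewhat artificial binomial computation and handles both inequalities at once; the paper's route, on the other hand, makes the leading logarithm visible as an explicit primitive.

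One minor remark: your parenthetical that $\alpha_0$ is ``bounded away from zero'' is not quite right, since $\alpha_0\to 0$ as $r(0,v_2)\to 0$. This is harmless for your argument, however, because all comparison constants you use (in particular $\bigl(1+(2e^{\alpha_0})^{-1}\bigr)^{e^{\alpha_0}}$ and $A(\delta_0)/\delta_0$) remain bounded above and below as $\alpha_0$ ranges over any bounded interval $[0,\alpha_{\max}]$; only the upper bound on $\alpha_0$ coming from $r(0,v_2)<1/\sqrt{\Lambda}$ is actually needed.
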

\begin{proof}
 Using~\eqref{GExplicit} we get 
 $$F(x)=\frac{e^{2\alpha_0}\delta_0\left(\delta_0+1-e^{\alpha_0}\delta_0\right)^{e^{\alpha_0}}}{\left(\delta_0+1\right)^{e^{\alpha_0}+1}}
\int_x^1\frac{y^{e^{\alpha_0}}}{\left(y-\frac{e^{\alpha_0}\delta_0}{\delta_0+1}\right)^{e^{\alpha_0}+1}}dy$$

 If we set $n=\min\{m\in\mathbb{Z}\;:\; m\geq e^{\alpha_0} \}\geq 2$ and write $a=\frac{e^{\alpha_0}\delta_0}{\delta_0+1}$ we see that, since $\frac{y}{y-a}> 1$ then 
 %
 %
\begin{eqnarray*}
 \int_x^1\frac{y^{e^{\alpha_0}}}{\left(y-\frac{e^{\alpha_0}\delta_0}{\delta_0+1}\right)^{e^{\alpha_0}+1}}dy 
 &\leq& \int_x^1\frac{y^n}{\left(y-a\right)^{n+1}}dy \\
 &=& \int_x^1\frac{\sum_{k=0}^n {n\choose k} (y-a)^k a^{n-k}}{\left(y-a\right)^{n+1}}\,dy \\
 &=& \int_x^1\frac{1}{(y-a)}+\sum_{k=0}^{n-1} {n\choose k} (y-a)^{k-n-1} a^{n-k}\,dy \\
 &=& \left[ \log(y-a)-\sum_{k=0}^{n-1} \frac{{n\choose k}a^{n-k}}{n-k} \frac{ 1}{\left(y-a\right)^{n-k}}\right]_{x}^1
\end{eqnarray*}
 and the desired upper bound follows by evaluating the previous expression at $x=x_1$. The lower bound is similar although easier (one just needs to use $\left(\frac{y}{y-a}\right)^{e^{\alpha_0}}> \frac{y}{y-a}$). 
\end{proof}

We then conclude that we also have
\begin{equation}
\label{EEstimate}
c_1\delta_0\log(1/\delta_0)\leq  E(\delta_0) \leq C_1\delta_0\log(1/\delta_0)\;.
\end{equation}

In particular, from~\eqref{mainAss} we see that $\eta_0>E(\delta_0)>F(x)$, for all $x$, so that~\eqref{etaEst} implies that $u_{\eta}=+\infty$ and the derived estimates hold in the entire region $\cal R$.

\vspace{0,5cm}

We are now ready to prove that under assumption~\eqref{mainAss} a trapped surface has to form along $v=v_2$ . We will argue by contradiction by assuming that the entire line $v=v_2$ is contained in ${\cal R}$, i.e, that 
\begin{equation}
\label{contraAss}
 [0,\bar{u}_2)\times\{v_2\}\subset {\cal R}  \text{ and } (\bar{u}_2,v_2)\in \overline{{\cal Q}^+}\setminus {\cal Q}^+\;.
\end{equation}
Note that using~\eqref{Rpast} we then have  $J^-([0,\bar{u}_2)\times\{v_2\})\subset \cal R$ and, in particular, 
$[0,\bar{u}_2)\times [v_1,v_2]\subset\cal R$. 

Then, relying on~\eqref{rDif}, we see that  
\begin{eqnarray*}
 r(u,v_1) &=& r(u,v_2) -  (r(u,v_2)-r(u,v_1))  \\
 &\geq& x(u)r(0,v_2)-e^{\alpha_0}(r(0,v_2)-r(0,v_1)) \\
& =& x(u)(1+\delta_0)r(0,v_1) -e^{\alpha_0}\delta_0 r(0,v_1)
\end{eqnarray*}
which reads 
\begin{equation}
\label{rEstX} 
 r(u,v_1)\geq \left[x(u)(1+\delta_0)-e^{\alpha_0}\delta_0 \right] r(0,v_1)\;.
\end{equation}
We will now show that 
\begin{equation}
\label{limX}
 \lim_{u\rightarrow \bar{u}_2} x(u)\leq x_0\;,
\end{equation}
for $x_0$ given by~\eqref{x0Def}. In fact, by monotonicity the limit clearly exists and if we assume that the desired bound is not satisfied, then 
$x(u)>x_0+\epsilon$, for some $\epsilon>0$ and for all $u<\bar u_2$. After noticing that $(u,v_2)\in{\cal Q}^+\Rightarrow (u,v_1)\in{\cal Q}^+$,~\eqref{rEstX} gives 
$$\lim_{u\rightarrow\bar u_2} r(u,v_1)>0$$
 which in view of assumption $(i)$ of Theorem~\ref{mainThm} and~\eqref{limRv1} implies that $(\bar u_2,v_1)\in {\cal Q}^+\setminus\Gamma$. Now let $\bar v=\sup\{v\in [v_1,v_2)\;:\; (\bar u_2, v)\in{\cal Q}^+\}>v_1$  and note that 
 $$r(u,v)>r(\bar u_2,v_1)>0\;\;,\;\text{ for all } (u,v)\in[0,\bar u_2]\times [v_1,\bar v]\setminus \{(\bar u_2,\bar v)\}\;.$$
 Applying the extension principle of Proposition~\ref{propExt} we conclude that $(\bar u_2,\bar v)\in  {\cal Q}^+$ and therefore $\bar v = v_2$ which contradicts the fact that $(\bar u_2,v_2)\notin  {\cal Q}^+$.  In conclusion,~\eqref{limX} holds.

%
We are now allowed to set  $u_0=\lim_{x\rightarrow x_0}u(x)$ and then $x>x_0\Leftrightarrow u<u_0$. In particular, under assumption~\eqref{contraAss}, we see that $(u,v_2)\in \cal R$, for all $u<u_0$. As a  consequence, since $u_1:=u(x_1)<u_0$, we conclude that $(u_1,v_2)\in\cal R$. 

As seen before~\eqref{less1}, in $\cal R$, we have $1-\mu>0$ which implies that 
$$\frac{2\varpi}{r}<1-\frac{\Lambda}{3}r^2\leq 1\,,$$
from which we readily conclude that, in $\cal R$,  
\begin{equation}
\label{etaR}
\eta<1\;. 
\end{equation}
But then~\eqref{contraAss} implies that $\eta(x_1)<1$ which, in view of~\eqref{etaEst}, leads to  
\begin{equation}
\label{ineqContra}
 \eta_0\leq e^{-G(x_1)}\eta(x_1) +F(x_1) < e^{-G(x_1)}+F(x_1)= E(\delta_0)\;,
\end{equation}
in contradiction with~\eqref{mainAss}, in view of~\eqref{EEstimate}.  It then follows that~\eqref{contraAss} cannot be true and therefore there must exist $u^*<\bar{u}$ such that $\lambda(u^*,v_2)=0$.

In fact as a consequence of~\eqref{lambdaNeg}
$u^*$ is the unique solution to $\lambda(u,v_2)=0$. Moreover, recalling the argument leading to~\eqref{etaR} and~\eqref{ineqContra}, we must have  $x(u^*)\geq x_1$ and then 
\begin{eqnarray}
\nonumber
r(u^*,v_2)&=&x(u^*)r(0,v_2) \\ 
\nonumber
&\geq&  x_1r(0,v_2) \\ 
\nonumber
&=& \frac{(2e^{\alpha_0}+1)e^{\alpha_0}\delta_0}{1+\delta_0}r(0,v_2)\\
\label{r*Est}
&=&{(2e^{\alpha_0}+1)e^{\alpha_0}\delta_0}r(0,v_1)\;.
\end{eqnarray}

Moreover, again by Proposition~\ref{propGlobal} we also have $[0,u^*)\times\{v_2\}\subset\cal R$ and we are allowed to apply~\eqref{rEstX} to obtain, for all $u<u^*\leq u_1$,
\begin{eqnarray}
\nonumber
 r(u,v_1)&\geq&  \left[x(u)(1+\delta_0)-e^{\alpha_0}\delta_0 \right] r(0,v_1) \\
 \nonumber
 &\geq&  \left[x_1(1+\delta_0)-e^{\alpha_0}\delta_0 \right] r(0,v_1) \\
 &\geq& 2e^{2\alpha_0} \delta_0r(0,v_1)\;.
\end{eqnarray}
By taking the limit when $u\rightarrow u^*$ we see that
\begin{equation}
 r(u^*,v_1)\geq 2e^{2\alpha_0} \delta_0r(0,v_1)\;,
\end{equation}
which, in view of~\eqref{limRv1}, allows us to conclude that the marginally trapped surface forms, along $v=v_2$, before (as measured by $u$) $v=v_1$ reaches $\bar \Gamma$. 
\end{proof}

\section*{Acknowledgements}

This work was partially supported by FCT/Portugal through UID/MAT/04459/2013 and by FCT/Portugal and CERN through CERN/FIS-PAR/0023/2019. 

%

\end{document}